\theoremstyle{definition}
\theoremstyle{remark}
\newtheorem*{observation}{Observation}
\begin{document}

\title{Phylosymmetric algebras: mathematical properties of a new tool in phylogenetics }

\author{Michael Hendriksen         \and
        Julia A. Shore 
}

\institute{M. Hendriksen \at
              Centre for Research in Mathematics and Data Science, Western Sydney University, NSW, Australia \& Institut f{\"u}r Molekular Evolution, Heinrich-Heine Universit{\"a}t\\
              \email{michael.hendriksen@hhu.de}             \\
             \emph{Substantial parts of MH's research were carried out at both WSU and HHU}
             \and
           J. Shore \at
              University of Tasmania, Churchill Avenue, Sandy Bay, Tasmania, Australia 7005
}

\date{Received: date / Accepted: date}

\maketitle

\begin{abstract}
In phylogenetics it is of interest for rate matrix sets to satisfy closure under matrix multiplication as this makes finding the set of corresponding transition matrices possible without having to compute matrix exponentials. It is also advantageous to have a small number of free parameters as this, in applications, will result in a reduction of computation time. We explore a method of building a rate matrix set from a rooted tree structure by assigning rates to internal tree nodes and states to the leaves, then defining the rate of change between two states as the rate assigned to the most recent common ancestor of those two states. We investigate the properties of these matrix sets from both a linear algebra and a graph theory perspective and show that any rate matrix set generated this way is closed under matrix multiplication. The consequences of setting two rates assigned to internal tree nodes to be equal are then considered. This methodology could be used to develop parameterised models of amino acid substitution which have a small number of parameters but convey biological meaning.
\keywords{Phylogenetic methods \and graph theory \and matrix algebras \and rate matrices \and matrix models \and rooted trees}
\end{abstract}

\section*{Acknowledgements}
MH thanks the Volkswagen Foundation 93\_046 grant for support during research at HHU and the Australian Postgraduate Award for support during research at WSU.

\section{Introduction}
\label{intro}
Phylogenetics is the study of constructing phylogenetic trees that represent evolutionary history. Analysis of RNA, DNA and protein sequence data with the use of continuous time Markov chains to measure the frequency of occurrence of point mutations is commonly employed in this field. From a continuous time Markov chain, transitions matrices (whose matrix entries represent probabilities of a change of state for a set time period) and rate matrices (whose entries represent the rates of change between states) can be generated. Transition matrices in phylogenetics are typically classified as either empirical, where the transition probabilities are values which have been calculated by analysing sequence data, or parameterised, where transition probabilities are represented by free parameters which are chosen to fit data as needed \citep{yang2014molecular}. Given that a parameterised transition matrix contains free parameters, it can be thought of as a set of transition matrices and such a set is often referred to as a model where the set of transition matrices is denoted by $ \mathcal{M} $ and the set of corresponding rate matrices is denoted by $\mathcal{Q}$. Parameterised models are often developed to be consistent with biological and chemical mechanisms (e.g. the K2P model \cite{kimura1980simple} captures the fact that it is chemically easier to substitute a purine for a purine or a pyrimidine for a pyrimidine) but sometimes they are developed to satisfy mathematical properties. Some parameterised models are more complicated than setting two rates to be equal to each other e.g. there are multiplicative constraints on matrix entries. In this paper, however, we will only be looking at models whose constraints are that some rates are equal to other rates.

The Lie Markov models (LMM) \citep{sumner2012lie, fernandez2015lie} are a set of parameterised DNA rate substitution models. Their construction is based on mathematical properties of matrices: each rate matrix model in this set forms a Lie algebra (note that a Lie algebra in this context can be defined as a matrix vector space which is closed under the operation $[A,B] = AB-BA$) as this guarantees that each transition matrix set is closed under matrix multiplication. In a study following this, \citet{shore2015lie} found that if a rate matrix set, $\mathcal{Q}$, forms a matrix algebra (a matrix algebra we define as a matrix vector space which is closed under matrix multiplication, any matrix algebra is automatically a Lie algebra), the set of corresponding transition matrices is $ \{ I + Q: Q \in \mathcal{Q}, $ det$(I+Q) \not=0 \} $. This makes finding the space of corresponding transition matrices a straightforward process compared to the usual practice of having to calculate matrix exponentials, which is notoriously computationally expensive \citep{moler1978nineteen}, although unfortunately this does not completely absolve the necessity of calculating matrix exponentials in practice. It is therefore advantageous for a rate matrix set to form a matrix algebra. 

The study conducted by \citet{shore2019} employed a method of generating rate matrix sets from trees by labelling leaves on a rooted tree as the states and then defining the rate of change between two states to be the rate assigned to their most recent common ancestor (note that this method is explained in more detail in Section \ref{backgroundSection}). This method was used to test if certain biological mechanisms to distinguish amino acids could have developed in a serial manner (i.e. the specificity of a mechanism increased over time) and what properties of amino acids could have effected this development. To test this, the rooted trees were used to represent the increasing specificity of amino acid selection mechanisms rather than the evolution of a group of organisms. 

Their methodology, which is now the focus of this work, was used to show that there is a link between properties of amino acids (namely their polarity and the class in which their corresponding aaRS fall into) and the observed rates of change between amino acids as described in \citet{le2008improved}. Given that this methodology has already been shown to correlate with biological mechanisms, it is now proposed that it be used to develop a suite of parameterised substitution models; particularly for amino acid substitution of which the most commonly used rate matrices are empirical. The family of rate matrix sets generated by this method has previously been unexplored and we now aim to gain a mathematical understanding of these matrix sets.

In the present paper, we introduce a set of matrices associated with trees with rates associated to each interior vertex. In Section \ref{graphTsection}, we derive results on the multiplication of these matrices, and show, in the case that each rate is unique, that the matrices form a matrix algebra, which we refer to as a \textit{phylosymmetric algebra}. In Section \ref{sectionRepeatedR}, we extend this result to completely characterise all conditions for which the matrices form a matrix algebra when two rates are identical, and derive sufficient conditions for simple cases of arbitrarily many equal rates.

\section{Background}
\label{backgroundSection}

\begin{definition} 
A \textit{rooted tree} $\mathcal{T}$ on a set of taxa $X$ is a connected, directed acyclic graph with no vertices of degree-$2$ other than the root, and whose leaves (degree-$1$ vertices) are bijectively labelled by the set $X$. The vertices other than the root and the leaves are referred to as \textit{internal vertices}. Subtrees of $ \mathcal{T} $ are denoted by $T$. The set of all rooted phylogenetic trees on a set of taxa $X$ is denoted $RP(X)$.
\end{definition}

All trees in this paper are rooted trees and are permitted to be non-binary. We will henceforth refer to them as $X$-trees, or simply trees if there is no ambiguity.

If there is a directed edge from a vertex $u$ to a vertex $v$, then we say that $u$ is a \textit{parent} of $v$ and $v$ is a \textit{child} of $u$. If there is a directed path from $u$ to $v$ then $u$ is an \textit{ancestor} of $v$ and $v$ is a descendant of $u$. In particular, a parent of a vertex $v$ is always an ancestor of $v$, a child of $v$ is always a descendant of $v$, and $v$ is both an ancestor and descendant of itself. If two vertices $u$ and $v$ share a parent vertex, we say that $u$ and $v$ are \textit{siblings} of each other. 

\begin{definition}
A \textit{hierarchy} $H$ on a set $X$ is a collection of subsets of $X$ with the following properties:

\begin{enumerate}
\item $H$ contains both $X$ and all singleton sets $\{x\}$ for $x \in X$.
\item If $H_1,H_2\in H$, then $H_1 \cap H_2 = \varnothing$, $H_1 \subseteq H_2$ or $H_2 \subseteq H_1$.
\end{enumerate} 
\end{definition}

\begin{definition}
Let $\mathcal{T} \in RP(X)$ be a tree and $v$ be a vertex of $\mathcal{T}$. Then the \textit{cluster} of $\mathcal{T}$ associated with $v$ is the subset of $X$ consisting of the descendants of $v$ in $\mathcal{T}$.
\end{definition}

A collection of subsets of $X$ is a hierarchy if and only if it is the set of clusters of some rooted phylogenetic tree $\mathcal{T}$ taken over all vertices of $\mathcal{T}$ (see~\cite{steel2016phylogeny} for instance). For this reason we refer to the set of clusters of $T$ as the \textit{hierarchy} of $\mathcal{T}$, denoted $H(\mathcal{T})$.

Suppose $ \mathcal{T} $ is a tree with vertex set $V$ and leaf set $ X = \{1,2,...,n\} \subseteq V $. For each pair of vertices $ a,b $ we denote their most recent common ancestor as mrca$(a,b)$. Define a function $\omega: V \rightarrow \mathbb{R}$ that assigns a real number to each vertex of the tree. For each vertex, $u \in V$, we call $\omega(u) = \alpha$ the \textit{rate} at $u$. Define the subset $ C_{\alpha} \subseteq X\times X $ where $ (x,y) \in C_{\alpha} $ if and only if mrca$(x,y) = u$. It follows that the set $ \{ C_{\alpha}: \alpha \in V \} $ forms a partition of $ X \times X $. 

To each $ C_{\alpha} $ we associate an $ n \times n $ matrix $Q_{\alpha}$ with off diagonal entries given by

\[ (Q_{\alpha})_{xy} = 
\begin{cases} 
1 & \mbox{ if mrca}(x,y) = u, \\ 
0 & \mbox{ otherwise} \\ \end{cases}; \]
and diagonal entries
 
\[ (Q_{\alpha})_{xx} = -\# (z:(x,z) \in C_{\alpha}). \]

We refer to $Q_\alpha$ as the \textit{rate matrix associated with $\alpha$}. Note that when $u$ is a leaf on $\mathcal{T}$, the corresponding rate matrix $Q_{\alpha} = 0$, and that matrices produced by the mrca function are symmetric. The set of mrca matrices produced by a single tree form the basis for a matrix algebra (see Theorem \ref{treealgebraThm}). Therefore products in this space are symmetric, which implies that the algebra is commutative (see Lemma \ref{commutingLemma}). The intent of this paper is to investigate the properties of the resulting set of matrix algebras.

\begin{remark}
It follows quickly from the definitions that 

\[ \sum_{\alpha \in \omega(V)} Q_\alpha = J, \]
where $J$ is the $n \times n$ matrix with $1$ in each off diagonal entry and $1-n$ in each diagonal entry. In fact, if some non-leaf vertex $u$ has $m$ leaf descendants and we denote the set of all vertices that are descendants of some vertex $u$ by $V_u$, we can see that  

\[ \sum_{\alpha \in \omega(V_u)} Q_\alpha = J_u, \]
where $J_u$ is the matrix

\[ (J_u)_{ij} = 
\begin{cases} 
1 & \mbox{if $i\ne j$ and $i,j$ are descendants of $u$}, \\ 
-m & \mbox{if i=j, and} \\ 
0 & \mbox{otherwise.} \\\end{cases} \]
\end{remark}

\begin{lemma}
If the product of two symmetric matrices is also symmetric, then those two matrices commute \citep{leon2010linear}.\label{commutingLemma}
\end{lemma}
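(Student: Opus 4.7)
My plan is to prove this by a short direct manipulation of transposes, since the hypothesis immediately interacts with the standard identity $(AB)^T = B^T A^T$. First I would fix symmetric matrices $A$ and $B$ with $A^T = A$ and $B^T = B$, and assume that $AB$ is symmetric, meaning $(AB)^T = AB$.

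Next, I would apply the reversal rule for transposes to rewrite $(AB)^T$ as $B^T A^T$, and then substitute $A^T = A$ and $B^T = B$ to obtain $B^T A^T = BA$. Chaining these equalities with the assumption $(AB)^T = AB$ then gives the single line $AB = (AB)^T = B^T A^T = BA$, which is exactly the desired commutation.

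There really is no obstacle here: the argument is a one-line identity once the transpose-reversal rule is invoked, and no additional structure (e.g.\ invertibility, positivity, or a specific ground field beyond one where the transpose behaves in the usual way) is required. In the context of this paper the lemma will be applied to the $Q_\alpha$ matrices, which are symmetric by construction, so once we know that products inside the algebra remain symmetric we immediately conclude the algebra is commutative — which is the only reason the lemma is being invoked.
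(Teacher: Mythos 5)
Your proof is correct and is essentially identical to the paper's: both establish $AB = (AB)^T = B^TA^T = BA$ using the transpose-reversal rule and the symmetry of $A$ and $B$. No issues.
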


\begin{proof}
Let $A$, $B$ and $AB$ be symmetric matrices. Then we have:
\begin{align*}
    AB &= (AB)^T \\
    &= B^T A^T \\
    &= BA.
\end{align*}
\end{proof}

\begin{example}
We end this section by computing the rate matrix set associated with the tree in Figure \ref{f:RateExample}.
\begin{figure}
\centering
\begin{tikzpicture}[level 1/.style={sibling distance=14mm},level 2/.style={sibling distance=9mm},level 3/.style={sibling distance=7mm},
  every node/.style = {shape=rectangle, rounded corners,
    draw, align=center}]]
  \node {$ \alpha $}
    child[level distance = 2cm]{node{$ \beta $}
    	child[level distance = 2cm]{node{1}}
    	child[level distance = 2cm]{node{2}} }
    child[level distance = 2cm]{node{$ \gamma $}
    	child[level distance = 2cm]{node{3}}
    	child[level distance = 1cm]{node{$ \delta $}
    		child[level distance = 1cm]{node{4}}
    		child[level distance = 1cm]{node{5}}}}  ;.
\end{tikzpicture}
\caption{A rooted tree on taxa $X = \{ 1,2,3,4,5 \}$, with all non-leaf vertices labelled by their rates.}
\label{f:RateExample}
\end{figure}
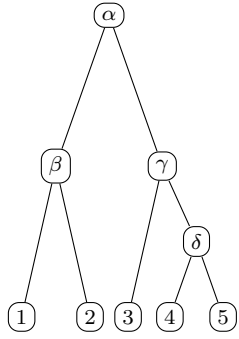

In this space we have

\[ Q_{\alpha} =  \left(  \begin{array}{rrrrr}
-3 & 0 & 1 & 1 & 1 \\
0 & -3 & 1 & 1 & 1 \\
1 & 1 & -2 & 0 & 0 \\
1 & 1 & 0 & -2 & 0 \\
1 & 1 & 0 & 0 & -2 \\
\end{array} \right),
Q_{\beta} =  \left(  \begin{array}{rrrrr}
-1 & 1 & 0 & 0 & 0 \\
1 & -1 & 0 & 0 & 0 \\
0 & 0 & \phantom{-}0 & \phantom{-}0 & \phantom{-}0 \\
0 & 0 & 0 & 0 & 0 \\
0 & 0 & 0 & 0 & 0 \\
\end{array} \right),\]
\[
Q_{\gamma} =  \left(  \begin{array}{rrrrr}
0 & \phantom{-}0 & 0 & 0 & 0 \\
\phantom{-}0 & 0 & 0 & 0 & 0 \\
0 & 0 & -2 & 1 & 1 \\
0 & 0 & 1 & -1 & 0 \\
0 & 0 & 1 & 0 & -1 \\
\end{array} \right),
Q_{\delta} =  \left(  \begin{array}{rrrrr}
0 & 0 & 0 & 0 & 0 \\
0 & 0 & 0 & 0 & 0 \\
\phantom{-}0 & \phantom{-}0 & \phantom{-}0 & 0 & 0 \\
0 & 0 & 0 & -1 & 1 \\
0 & 0 & 0 & 1 & -1 \\
\end{array} \right),\]

\noindent and the matrix algebra is the set

\[ \left\{  \left(  \begin{array}{rrrrr}
* & \beta & \alpha & \alpha & \alpha \\
\beta & * & \alpha & \alpha & \alpha \\
\alpha & \alpha & * & \gamma & \gamma \\
\alpha & \alpha & \gamma & * & \delta \\
\alpha & \alpha & \gamma & \delta & * \\
\end{array} \right): \alpha, \beta, \gamma, \delta \in \mathbb{R}
\right\}  \]

\noindent where $*$ is chosen to give zero row, and column, sum.
\end{example}

\section{The link to graph theory}
\label{graphTsection}

We can also construct the matrix algebra corresponding to a tree $\mathcal{T}$ by considering a certain set of graphs associated with $\mathcal{T}$ that we will refer to as tree-induced graph sets (or TIGS). The basis elements of the matrix algebra will then be the Laplacian matrices of the associated TIGS. 

\begin{definition}
Let $\mathcal{G}_X$ be a set of graphs on vertex set $X$, where $\mathcal{G}_X = \{G_1 = (X,E_1),...,G_\ell = (X,E_\ell)\}$ with edge sets $E_1,...,E_\ell$ disjoint, such that $(X,\cup E_i)$ is the complete graph on $|X|$ vertices. Suppose each graph $G_i \in \mathcal{G}$ is a disjoint union $Z_i \sqcup C_i$ where $Z_i$ is a set of degree-$0$ vertices and $C_i$ is a complete $k$-partite graph for some $k$, and that without loss of generality that $G_1$ contains no degree-$0$ vertices. Finally, given a graph $G_i$ in $\mathcal{G}$, suppose that for each partition $P$ of the $k$ partitions in $C_i$ that contain more than one element, there exists a unique graph $G_j$ where $V(C_j) = V(P)$.  Then we call $\mathcal{G}$ a \textit{tree-induced graph set} (or TIGS). 
\end{definition}

This definition may seem opaque, so we provide an example to aid understanding. While the TIGS have been defined independently of trees, there is a very natural association between TIGS and trees, described in Theorem \ref{t:TIGSvsTrees}. We can therefore refer to a tree and its associated TIGS, with the intention of examining the TIGS using the Laplacian of each graph in the graph set.

\begin{example}
\begin{figure}
\centering
\begin{tikzpicture} 
\SetGraphUnit{1} 
\Vertices{circle}{1,2,3,4,5}
  \path (1) edge              node {} (3)
            edge              node {} (4)
            edge              node {} (5)
        (2) edge              node {} (3)
            edge              node {} (4)
            edge              node {} (5);
\node at (0.35,-2) {$G_\alpha$};
\end{tikzpicture}
\hspace{1cm}
\begin{tikzpicture} 
\SetGraphUnit{1} 
\Vertices{circle}{1,2,3,4,5}
  \path (1) edge              node {} (2);
\node at (0.35,-2) {$G_\beta$};
\end{tikzpicture}
\hspace{1cm}
\begin{tikzpicture} 
\SetGraphUnit{1} 
\Vertices{circle}{1,2,3,4,5}
  \path (3) edge              node {} (5)
            edge              node {} (4);
\node at (0.35,-2) {$G_\gamma$};
\end{tikzpicture}
\hspace{1cm}
\begin{tikzpicture} 
\SetGraphUnit{1} 
\Vertices{circle}{1,2,3,4,5}
  \path (4) edge              node {} (5);
\node at (0.35,-2) {$G_\delta$};
\end{tikzpicture}
\caption{An example of a TIGS. Additionally, these graphs are the $\alpha$-, $\beta$-, $\gamma$- and $\delta$-mrca graphs of the tree in Figure \ref{f:RateExample}, as defined in Definition \ref{d:mrca}.}
\label{f:firstTIGS}
\end{figure}
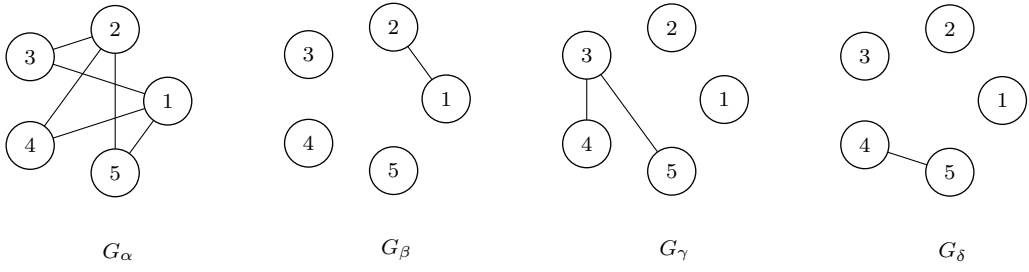

For example, consider the set of graphs depicted in Figure \ref{f:firstTIGS}. We can see that $G_\alpha$ is the only graph in the set that has no degree zero vertices. Further, $G_\alpha$ is a bipartite graph, with partitions $P_1=\{1,2\}$ and $P_2=\{3,4,5\}$. We can then see that $V(G_\beta)$ corresponds to the partition $P_1$, as $C_\beta=P_1$ and $Z_\beta = X \backslash P_1$, and that $C_\beta$ is bipartite with partitions $\{1\}$ and $\{2\}$. Similarly, $G_\gamma$ corresponds to the partition $P_2$ of $G_\alpha$, and $G_\gamma$ is bipartite with partitions $\{3\}$ and $\{4,5\}$. Finally, $G_\delta$ corresponds to the partition $\{4,5\}$ of $G_\gamma$. As the only remaining partitions are singletons, the set $\{G_\alpha,G_\beta,G_\gamma,G_\delta\}$ is a TIGS.
\end{example}

\begin{theorem}
\label{t:TIGSvsTrees}
There exists a bijection between the set of hierarchies on $X$ and the set of tree-induced graph sets on $X$.
\end{theorem}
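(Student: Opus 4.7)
The plan is to exploit the well-known bijection between hierarchies on $X$ and rooted phylogenetic trees on $X$ that the excerpt has already cited, and then construct explicit maps $\Phi$ from rooted trees to TIGS and $\Psi$ from TIGS to rooted trees that are mutual inverses.

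First I would define $\Phi$ by the mrca construction already introduced for the rate matrices. Given a hierarchy $H$ on $X$ with corresponding tree $\mathcal{T}$, for each non-leaf vertex $v$ of $\mathcal{T}$ with children $c_1,\ldots,c_k$, let $G_v=(X,E_v)$ where
\[
E_v=\{\{x,y\}:x,y\in X,\ \mathrm{mrca}(x,y)=v\}.
\]
Then $G_v=Z_v\sqcup C_v$, with $Z_v=X\setminus\mathrm{cluster}(v)$ a set of degree-$0$ vertices and $C_v$ the complete $k$-partite graph on $\mathrm{cluster}(v)$ whose parts are $\mathrm{cluster}(c_1),\ldots,\mathrm{cluster}(c_k)$. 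Verifying the TIGS axioms is routine: the edge sets are disjoint with union $K_X$ because every pair $\{x,y\}$ has a unique mrca; $G_{\mathrm{root}}$ has no degree-$0$ vertices since $\mathrm{cluster}(\mathrm{root})=X$; and each non-singleton part of $C_v$ coincides with $V(C_{c_i})$ for the unique non-leaf child $c_i$ whose cluster equals that part.

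For $\Psi$ I would work recursively, first proving that the graph with no degree-$0$ vertices in a TIGS is unique. If $G_1$ has complete $k$-partite part $C_1$ on all of $X$, then any other $G_i$ in the collection has edges disjoint from those of $G_1$, hence confined within a single part of $G_1$; every vertex outside that part therefore has degree $0$ in $G_i$. Starting from this unique root graph I would extract the collection
\[
H(\mathcal{G})=\{X\}\cup\{\{x\}:x\in X\}\cup\bigcup_i\{\text{parts of }C_i\}.
\]
By induction on $|X|$, using the TIGS clause that each non-singleton part $P$ of $C_i$ equals $V(C_j)$ for a unique $G_j$, the subfamily of $\mathcal{G}$ supported on $P$ is itself a TIGS on $P$; consequently any two sets in $H(\mathcal{G})$ are either nested or disjoint, so $H(\mathcal{G})$ is a hierarchy.

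Mutual inverseness is then immediate. Under $\Psi\circ\Phi$ the parts of each $C_v$ are by construction precisely the clusters of the children of $v$, so the cluster collection is recovered. Under $\Phi\circ\Psi$ each pair $\{x,y\}$ lies in a unique $G_v$ of the original TIGS, and in the reconstructed tree the mrca of $x,y$ is exactly the vertex associated with $G_v$, so the graphs agree edge by edge. The main obstacle will be the inductive well-definedness of $\Psi$: one must show that the local "unique $G_j$" clause, which only asserts existence of a matching graph for parts inside a single $C_i$, actually assembles globally into a nested family of subsets rather than an overlapping one. Once this nested structure is in hand, the tree is forced on us and the bijection drops out.
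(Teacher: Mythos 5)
Your proposal is correct and follows essentially the same route as the paper: the forward map sends each cluster to the complete multipartite graph whose parts are its maximal subclusters (the paper phrases this as the complement of a disjoint union of complete graphs, which is the same object), and the inverse extracts the parts of the $C_i$'s and verifies they form a hierarchy. The only difference is organizational — you establish the nesting property by induction on $|X|$ after isolating the unique root graph, whereas the paper checks the pairwise intersection condition directly — but both hinge on the same observation that each $C_i$ is connected and its edges avoid those of any coarser graph, so it sits inside a single part.
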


\begin{proof}
For a cluster $A$ in a hierarchy $H(T)$ with inclusion-maximal subclusters $A_1,...,A_\ell$, we can define the graph $G(A) = (V,E)$ where $V=X$ and $e=(v,w) \in E$ if and only if $v$ and $w$ are in the same inclusion-maximal subcluster $A_i$. This is the disjoint union of the complete graphs $K_{A_i}$. Let $Z$ be the subset of $V$ corresponding to $X \backslash A$. Let $\varphi$ be a function that maps $A$ to $G(A) \cup Z$, and let $\varphi^C$ be the function that maps $A$ to $G^C(A) \cup Z$, where $G^C$ denotes the complement of $G$ (that is, the graph consisting of the same vertex set as $G$ and an edge between vertices $v$ and $w$ if and only if there is not an edge between them in $G$).

Denote by $\phi$ the function that maps $H(T)$ to the set $\{\varphi^C(A) \mid A \in H(T) \}$. This is certainly injective, as $\varphi$ and the operation of taking the complement on the subgraph induced by $G(A)$ are both invertible.  We therefore just need to show that the image of $\phi$ is precisely the set of TIGS.

Suppose we have some TIGS $\mathcal{G} = \{G_1 = (X,E_1),...,G_\ell = (X,E_k)\}$. Let $\mathcal{G}^C = \{C_1^C \cup Z_1,...,C_k^C \cup Z_k \}$, where for $C_i$ the complement is taken on the induced subgraph of $C_i$. Let $H_{i,j}$ be the vertex set of the $j$-th complete graph of $C_i$. We claim that $\mathcal{H} = \{X\} \cup S \cup \{H_{i,j} \mid i \in \{1,...,\ell \}, j \in \{1,...,k\} \}$ forms a hierarchy, where $S$ is the set of singletons on $X$.

Recall that a hierarchy is a set of subsets of $X$ that contains $X$, all singletons and the intersection between two subsets $A$ and $B$ is $A,B$ or empty. Certainly $\mathcal{H}$ contains all singletons, and the intersection of any $H_{i,j}$ with $X$ is $H_{i,j}$, so it only remains to check that for any $H_{i_1,j_1}, H_{i_2,j_2}$ the intersection $H_{i_1,j_1} \cap H_{i_2,j_2}$ is either empty or one of $H_{i_1,j_1}$ or  $H_{i_2,j_2}$.

Suppose $H_{i_1,j_1} \cap H_{i_2,j_2}$ is non-empty. The only way that this is possible is if $V(C_{i_1})$ is a subset of one of the partitions of $C_{i_2}$, or vice versa. But then, respectively, $H_{i_1,j_1} \subseteq H_{i_2,j_2}$ or the reverse, so the intersection $H_{i_1,j_1} \cap H_{i_2,j_2}$ is one of $H_{i_1,j_1}$ or  $H_{i_2,j_2}$.

It follows that $\mathcal{H}$ is a hierarchy and therefore that the stated bijection exists.
\end{proof}

Following the construction in Theorem \ref{t:TIGSvsTrees}, for each interior vertex of a tree, with rate $\alpha$, we can associate a single graph.

\begin{definition}
\label{d:mrca}
Let $\mathcal{T}$ be a tree with associated mrca partition $C_\alpha$. Let $G_\alpha$ be the graph $(V,E)$ where $V=X$ and an edge $e =(x,y) \in E$ if and only if $\omega(mrca(x,y))=\alpha$. Then $G_\alpha(\mathcal{T})$ is referred to as the \textit{$\alpha$-mrca graph of $\mathcal{T}$}.
\end{definition}

Then the set of mrca graphs of $\mathcal{T}$ is the corresponding tree-induced graph set as seen in Theorem \ref{t:TIGSvsTrees}. For example, the corresponding set of mrca graphs of the tree in Figure \ref{f:RateExample} is shown in Figure \ref{f:firstTIGS}.

Recall the folloring standard graph-theoretic definitions.

\begin{definition}
Let $G=(V,E)$ be a graph. Then the \textit{adjacency matrix} $A(G)$ of $G$ is the $|V| \times |V|$ matrix where

\[ (A(G))_{vw} = 
\begin{cases} 
1 & \mbox{ if $(v,w) \in E$}, \\ 
0 & \mbox{otherwise} \\ \end{cases}. \]

The \textit{degree matrix} $D(G)$ of $G$ is the diagonal $|V| \times |V|$ matrix

\[ (D(G))_{vw} = 
\begin{cases} 
deg(v) & \mbox{ if $v=w$}, \\ 
0 & \mbox{otherwise} \\ \end{cases}. \]

Finally, the \textit{Laplacian matrix} $L(G)$ of $G$ is the $|V| \times |V|$ matrix $L(G) = D(G) - A(G)$. We simply write $L,D,A$ if $G$ is clear from context.
\end{definition}

One can then see that the set of negative Laplacians of the associated mrca graphs of $\mathcal{T}$ correspond exactly to the basis elements of the matrix algebra.

\begin{theorem}
For any tree $\mathcal{T}$, interior vertex $u$, and rate $\omega(u) = \alpha$, $Q_\alpha = -L(G_\alpha(\mathcal{T}))$.
\end{theorem}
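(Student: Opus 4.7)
The plan is an entry-by-entry verification, comparing the matrix $Q_\alpha$ (from its definition in Section \ref{backgroundSection}) against $-L(G_\alpha(\mathcal{T})) = A(G_\alpha(\mathcal{T})) - D(G_\alpha(\mathcal{T}))$. Since $D(G_\alpha(\mathcal{T}))$ is diagonal, the off-diagonal and diagonal entries can be handled separately, and each reduces to a direct unpacking of the definitions.

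First, for the off-diagonal case $x \neq y$, I would observe that $(-L(G_\alpha(\mathcal{T})))_{xy} = A(G_\alpha(\mathcal{T}))_{xy}$, which by Definition \ref{d:mrca} equals $1$ exactly when $\omega(\mathrm{mrca}(x,y)) = \alpha$, i.e.\ when $\mathrm{mrca}(x,y) = u$. This is precisely the defining condition for $(Q_\alpha)_{xy} = 1$, so the two matrices agree off the diagonal.

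Next, for the diagonal entries, $(-L(G_\alpha(\mathcal{T})))_{xx}$ is the negative of the degree of $x$ in $G_\alpha(\mathcal{T})$, which counts its neighbours, namely $\#\{y \ne x : \mathrm{mrca}(x,y) = u\}$. Comparing this with $(Q_\alpha)_{xx} = -\#\{z : (x,z) \in C_\alpha\}$, the two counts agree once one notes that because $x$ is a leaf and $u$ is an interior vertex, $\mathrm{mrca}(x,x) = x \ne u$, so the self-pair contributes to neither count. The sets being enumerated are therefore identical and the diagonal entries match.

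The only subtlety I foresee is notational: Definition \ref{d:mrca} selects edges by the rate value $\alpha$ rather than by the vertex $u$, so one must invoke the fact that in this section $u$ is taken to be the unique interior vertex carrying rate $\alpha$ (the non-unique case being the subject of Section \ref{sectionRepeatedR}). Beyond this caveat, the theorem is a straightforward definitional check and poses no real obstacle.
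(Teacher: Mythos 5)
Your proof is correct: the paper itself states this theorem without proof, treating it as immediate from the definitions, and your entry-by-entry check (off-diagonal entries via the adjacency matrix, diagonal entries via the degree count, with the observation that the self-pair $(x,x)$ contributes to neither count since $\mathrm{mrca}(x,x)=x\neq u$) is exactly the verification the paper leaves implicit. Your caveat about the edge set of $G_\alpha$ being indexed by the rate value rather than the vertex is a fair observation and is consistent with the standing assumption of that section that rates are distinct.
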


In the next section we will use the properties of the Laplacians of the associated mrca graphs to prove properties of the resulting matrix algebras.

\section{Algebras induced by trees with distinct rates for each vertex}

We will now show that, for a given tree, the set of rate matrices under matrix multiplication forms a matrix algebra. 

\begin{definition}
A \textit{matrix algebra} is a matrix vector space which is closed under matrix multiplication. A \textit{phylosymmetric algebra} is a matrix set generated from a rooted tree using the previously described method. It always forms an commutative matrix algebra when the rates assigned to the non-leaf vertices are unique (see Theorem \ref{treealgebraThm}). We denote the matrix set generated from a tree $ \mathcal{T} $ by $ \mathcal{Q}_{\mathcal{T}} $.
\end{definition}

In order to prove that the set of rate matrices under matrix multiplication for a given tree $\mathcal{T}$ forms a matrix algebra, it suffices to check that for each possible pair of rate matrices $Q_\alpha, Q_\beta$, the product $Q_\alpha Q_\beta$ is a linear combination of rate matrices derived from $\mathcal{T}$. To do this we will need to be able to refer the relationship between different vertices of $\mathcal{T}$.

\begin{definition}
For a tree $\mathcal{T}$ and two vertices on this tree $u$ and $v$, we say that
\begin{itemize}
    \item $u$ and $v$ are comparable if either $u$ is a descendant of $v$ or the reverse. 
    \item $u$ and $v$ are incomparable if $u$ is neither an ancestor nor a descendant of $v$.
\end{itemize}
\end{definition}

We will also need to refer to different subtrees of $\mathcal{T}$.

\begin{definition}
For a tree $ \mathcal{T} $ which has an internal vertex $u$ with rate $\omega(u) = \alpha$, we define
\begin{itemize}
    \item $T^\alpha$ as the subtree rooted at $u$;
    \item $T_\beta^\alpha$ as the subtree rooted at the child of $u$ that is an ancestor of $v$.
    \end{itemize}
\end{definition}

Finally, we will need to appeal to some classical graph-theoretical results. Theorem \ref{t:k-walks} is folkloric and easily proven (see e.g. \cite{brouwer2011spectra}, Proposition 1.3.1) and Theorem \ref{t:k-walks2} can be proven in an almost identical way. We provide them here as they will be heavily used in the following work.

\begin{theorem}
\label{t:k-walks}
Let $G$ be a graph and $A=A(G)$ its adjacency matrix. Then $(A^k)_{ij}$ is the number of walks of length $k$ on $G$ from vertex $i$ to vertex $j$.
\end{theorem}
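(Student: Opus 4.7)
The plan is to prove this classical identity by induction on the walk length $k$, exploiting the recursive structure of matrix multiplication. This is a standard textbook result, so the proof will be short and the main challenge is simply presenting the bijection between matrix entries and walks cleanly.

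First I would establish the base case $k=1$. By definition of the adjacency matrix, $(A^1)_{ij}=A_{ij}$ equals $1$ if $(i,j)\in E$ and $0$ otherwise. Since a walk of length $1$ from $i$ to $j$ is exactly an edge between $i$ and $j$, the claim holds for $k=1$.

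Next I would carry out the inductive step. Assume that for some $k\geq 1$, the entry $(A^{k-1})_{iv}$ equals the number of walks of length $k-1$ from $i$ to $v$ on $G$, for every pair of vertices $i,v$. Then by the definition of matrix multiplication,
\[
(A^{k})_{ij} \;=\; \sum_{v\in V(G)} (A^{k-1})_{iv}\, A_{vj}.
\]
The key observation is that every walk of length $k$ from $i$ to $j$ can be uniquely decomposed as a walk of length $k-1$ from $i$ to some penultimate vertex $v$, followed by the edge $(v,j)$. The inductive hypothesis says $(A^{k-1})_{iv}$ counts the first part, and $A_{vj}$ is the indicator of whether $(v,j)$ is an edge. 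Summing the product over all possible $v$ therefore counts exactly the walks of length $k$ from $i$ to $j$, completing the induction.

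The main (minor) obstacle is really just being explicit about the bijection between the set of walks of length $k$ and the disjoint union, over $v$, of (walks of length $k-1$ to $v$) $\times$ (edges from $v$ to $j$); once this is spelled out the algebraic step is immediate. No deeper ideas are required, and the result will then be invoked together with Theorem \ref{t:k-walks2} to reason about walks in the mrca graphs in the next section.
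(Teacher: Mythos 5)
Your proof is correct and is the standard induction on $k$ that the paper itself defers to (it cites Brouwer--Haemers, Proposition 1.3.1, rather than writing the argument out); the base case, the decomposition of a length-$k$ walk at its penultimate vertex, and the matrix-multiplication identity are exactly the expected ingredients. No issues.
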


\begin{theorem}
\label{t:k-walks2}
Let $G_1=(V,E_1)$ and $G_2=(V,E_2)$ be graphs on the same set of vertices and $A_1=A(G_1),A_2=A(G_2)$ their corresponding adjacency matrices. Consider the multigraph $G^\times = (V,E_1 \cup E_2)$. Then $(A_1A_2)_{ij}$ is the number of walks of length $2$ on $G^\times$ from vertex $i$ to vertex $j$, where the first edge is taken from $E_1$ and the second from $E_2$.
\end{theorem}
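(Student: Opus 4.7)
The plan is to mimic the proof of Theorem \ref{t:k-walks} in the special case $k=2$, but carefully tracking which of the two edge sets each step of the walk comes from. The entire argument reduces to unpacking the definition of matrix multiplication and then giving each summand a combinatorial interpretation.

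First I would write
\[
(A_1 A_2)_{ij} \;=\; \sum_{k \in V} (A_1)_{ik}\,(A_2)_{kj}.
\]
Since $A_1$ and $A_2$ are $0/1$ matrices, each summand is either $0$ or $1$, with value $1$ precisely when $(i,k) \in E_1$ and $(k,j) \in E_2$. Thus the sum counts the number of intermediate vertices $k$ such that the ordered pair of edges $(i,k),(k,j)$ is a walk in $G^\times$ whose first edge lies in $E_1$ and whose second lies in $E_2$. Each such $k$ determines and is determined by exactly one such walk of length $2$ from $i$ to $j$, so the count equals the number of length-$2$ walks of the prescribed type.

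There is essentially no obstacle beyond bookkeeping: the only subtlety worth flagging is that $G^\times$ is a multigraph, so if the same pair $\{u,v\}$ is an edge in both $E_1$ and $E_2$ we must treat the two copies as distinct when enumerating walks; this is automatically respected by the sum because $(A_1)_{ik}$ and $(A_2)_{kj}$ are computed independently from the respective adjacency matrices. A brief remark to that effect would make the correspondence between summands and walks unambiguous, and the proof is then complete.
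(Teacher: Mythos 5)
Your proof is correct and is exactly the standard argument the paper has in mind: the paper gives no explicit proof, stating only that the result ``can be proven in an almost identical way'' to the folkloric walk-counting Theorem~\ref{t:k-walks}, which is precisely the expansion of $(A_1A_2)_{ij}=\sum_{k}(A_1)_{ik}(A_2)_{kj}$ that you carry out. Your remark about treating parallel edges in the multigraph as distinct is a sensible clarification and does not change the approach.
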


We are now in a position to investigate matrix multiplication of elements of $\mathcal{Q}_\mathcal{T}$, by appealing to the structure of the associated TIGS. We will consider squares of a rate matrix first.

\begin{theorem}
Let $u$ be a vertex of a tree $T$ so that $\omega(u)=\alpha$, and let $G_\alpha$ be an $\alpha$-mrca graph, and $Q_\alpha = - L(G_\alpha) = A_\alpha-D_\alpha$ be the $n \times n$ matrix described before.  Suppose $D_\alpha = diag(d_1,...,d_n)$. Then 

\[ (Q_\alpha^2)_{ij} = 
\begin{cases} 
d_i(d_i+1) & \mbox{ if $i=j$}, \\ 
-|T^\alpha| & \mbox{if $i$ and $j$ are in different $k$-partitions of $G_\alpha$} \\
d_i & \mbox{if $i \ne j$ are in the same $k$-partition of $G_\alpha$} \\\end{cases}. \]

Equivalently, if we denote the set of child vertices of $u$ by $C_u$,
\[ Q_\alpha^2 = (1-|T^\alpha|)Q_\alpha + \sum_{\beta \in \omega(C_u)} [(|T^\alpha| - |T^\beta|)(\sum_{\gamma \in \omega(V_u)} Q_\gamma )]. \]  
\label{squareTheorem}
\end{theorem}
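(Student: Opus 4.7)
My plan is to establish the first display entrywise via the graph-theoretic picture from Section~\ref{graphTsection}, and then derive the second display as an algebraic rearrangement. Write $Q_\alpha = A_\alpha - D_\alpha$ with $A_\alpha$ and $D_\alpha$ the adjacency and degree matrices of the $\alpha$-mrca graph $G_\alpha$, and expand
\[ Q_\alpha^2 \;=\; A_\alpha^2 \;-\; A_\alpha D_\alpha \;-\; D_\alpha A_\alpha \;+\; D_\alpha^2. \]
Since $D_\alpha$ is diagonal, the middle two terms have $(i,j)$ entry $d_j(A_\alpha)_{ij}$ and $d_i(A_\alpha)_{ij}$, and the last contributes only $d_i^2$ on the diagonal, so everything reduces to computing $(A_\alpha^2)_{ij}$, which by Theorem~\ref{t:k-walks} counts walks of length two from $i$ to $j$ in $G_\alpha$. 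The key structural input is that, by the TIGS characterisation, $G_\alpha$ is the disjoint union of a complete $|C_u|$-partite graph on the leaves of $T^\alpha$, with parts given by the leaf sets of the subtrees $T^\beta$ for $\beta \in \omega(C_u)$, together with $|X|-|T^\alpha|$ isolated vertices.

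The main calculation is then a four-way case split on $(i,j)$. For $i=j$, the length-two walks are just edge traversals and back, giving $(A_\alpha^2)_{ii}=d_i$; the cross-terms vanish on the diagonal, and $d_i + d_i^2 = d_i(d_i+1)$. For $i\ne j$ in the same part $T^\beta$, all three of $A_\alpha D_\alpha$, $D_\alpha A_\alpha$, $D_\alpha^2$ vanish at $(i,j)$, while the midpoints of length-two walks are exactly the vertices of $T^\alpha\setminus T^\beta$, giving $|T^\alpha|-|T^\beta|=d_i$. For $i,j$ in different parts $T^{\beta_i},T^{\beta_j}$, the midpoints must avoid both parts, yielding $(A_\alpha^2)_{ij} = |T^\alpha|-|T^{\beta_i}|-|T^{\beta_j}|$, which combines with $-d_i-d_j$ from the cross-terms and collapses to $-|T^\alpha|$ using $d_i+d_j = 2|T^\alpha|-|T^{\beta_i}|-|T^{\beta_j}|$. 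If $i$ or $j$ lies outside $T^\alpha$, the corresponding row or column of $Q_\alpha$ is identically zero, so $(Q_\alpha^2)_{ij}=0$, consistent with the formula at $d_i=0$.

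For the second, equivalent display I would match the right-hand side against the piecewise formula entry by entry, using the Remark in Section~\ref{backgroundSection} to evaluate $\sum_{\gamma\in\omega(V_u)}Q_\gamma$: this sum places $+1$ on every off-diagonal pair of leaf descendants of $u$ and a matching negative value on each such diagonal. The coefficients $|T^\alpha|-|T^\beta|$, aggregated across $\beta \in \omega(C_u)$, must then reproduce both the $-|T^\alpha|$ on cross-partition entries and the $d_i$ within each part, while the $(1-|T^\alpha|)Q_\alpha$ term supplies the remaining correction on the diagonal and on the edges of the $\alpha$-mrca graph. The step I expect to be the main obstacle is precisely this second verification: the multi-term decomposition has to simultaneously match the diagonal value $d_i(d_i+1)$ and all three off-diagonal regimes, and careful bookkeeping of which child subtree each vertex belongs to is needed to rule out sign or index slips.
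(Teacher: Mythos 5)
Your proposal is correct and follows essentially the same route as the paper's own proof: expanding $Q_\alpha^2 = A_\alpha^2 - A_\alpha D_\alpha - D_\alpha A_\alpha + D_\alpha^2$, evaluating the cross-terms via the diagonality of $D_\alpha$, counting length-two walks in the complete $k$-partite graph $G_\alpha$ to get $(A_\alpha^2)_{ij}$, and then matching the second display entrywise against Remark~1. The case analysis and the identity $d_i = |T^\alpha| - |T^{\beta_i}|$ are exactly the computations the paper carries out, and your extra attention to vertices outside $T^\alpha$ and to the entrywise verification of the second display is if anything slightly more careful than the published argument.
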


\begin{proof}
Since $Q_\alpha = A_\alpha-D_\alpha$, we know $Q_\alpha^2 = A_\alpha^2-D_\alpha A_\alpha - A_\alpha D_\alpha + D_\alpha^2$, and it suffices to consider each of these terms separately. 

As $D_\alpha$ is a diagonal matrix, the last three terms are trivial to calculate. Certainly $D_\alpha^2 = diag(d_1^2,...,d_n^2)$. Further,

\[ (D_\alpha A_\alpha)_{ij} = d_i(A)_{ij} =
\begin{cases} 
0 & \mbox{ if $i,j$ are in the same $k$-partition of $G_\alpha$}, \\ 
d_i & \mbox{otherwise} \\
\end{cases}, \]
and

\[ (A_\alpha D_\alpha)_{ij} = d_j(A)_{ij} =
\begin{cases} 
0 & \mbox{ if $i,j$ are in the same $k$-partition of $G_\alpha$}, \\ 
d_j & \mbox{otherwise.} \\
\end{cases}. \]

Now, by Theorem \ref{t:TIGSvsTrees} we can consider the associated TIGS graph (and in particular $G_\alpha$), and by Theorem \ref{t:k-walks}, $(A_\alpha^2)_{ij}$ is the number of walks of length $2$ from $i$ to $j$ in $G_\alpha$. As $G_\alpha$ is the complete $k$-partite graph for $k$ the number of partitions, if $i,j$ are in the same partition, this is simply the number of vertices of $G_\alpha$ not in this partition, so $d_i$. If they are in different partitions, this is the number of vertices that are in neither the partition containing $i$ nor the one containing $j$. If we denote the partition containing $i$ by $P(i)$ and similarly for $j$, this is $|T^\alpha|-|P(i)|-|P(j)|=d_i+d_j -|T^\alpha|$, since $|P(i)|=|T^\alpha|-d_i$ and $|P(j)| = |T^\alpha|-d_j$. 

To summarise,

\[ (A_\alpha^2)_{ij} = 
\begin{cases} 
d_i & \mbox{if $i,j$ are in the same $k$-partition of $G_\alpha$} \\
d_i+d_j - |T^\alpha| & \mbox{otherwise.} \\\end{cases} \] 

Since $Q_\alpha^2 = A_\alpha^2-D_\alpha A_\alpha - A_\alpha D_\alpha + D_\alpha^2$, we therefore obtain

\[ (Q_\alpha^2)_{ij} = 
\begin{cases} 
d_i(d_i+1) & \mbox{ if $i=j$}, \\ 
-|T^\alpha| & \mbox{if $i$ and $j$ are in different $k$-partitions of $G_\alpha$} \\
d_i & \mbox{if $i \ne j$ are in the same $k$-partition of $G_\alpha$} \\\end{cases}. \]
as required.

Finally, equivalence of the two expressions in the statement of the theorem follows simply by observing the entries of the matrix and applying Remark 1.
\end{proof}

We will now consider multiplication of two rate matrices associated to comparable vertices.

\begin{theorem}
Let $u$ and $v$ be vertices of a tree $T$ so that $\omega(u)=\alpha, \omega(v)=\beta$. Let $G_\alpha, G_\beta$ be $\alpha$- and $\beta$-mrca graphs, and $Q_\alpha = -L(G_\alpha) = A_\alpha-D_\alpha$ and $Q_\beta = -L(G_\beta) = A_\beta - D_\beta$ be the $n \times n$ matrices described before. Finally, suppose that $v$ is a descendant of $u$. Then 

\[ Q_\alpha Q_\beta = (|T_\beta^\alpha|-|T^\alpha|)Q_\beta = Q_\beta Q_\alpha. \]

\label{productTheorem}
\end{theorem}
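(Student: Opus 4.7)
The plan is to pass to the adjacency/degree decomposition $Q_\gamma = A_\gamma - D_\gamma$ and expand
\[ Q_\alpha Q_\beta = A_\alpha A_\beta - A_\alpha D_\beta - D_\alpha A_\beta + D_\alpha D_\beta, \]
computing each of the four terms separately. The key structural observation that drives everything is that, because $v$ is a descendant of $u$, the leaf set $T^\beta$ is entirely contained in a single partition of the complete multipartite graph $G_\alpha$, namely the partition corresponding to $T_\beta^\alpha$. Consequently, every $k \in T^\beta$ has the same $G_\alpha$-degree $d_k^\alpha = |T^\alpha| - |T_\beta^\alpha|$, and the $G_\alpha$-neighbours of $k$ are exactly the leaves in $T^\alpha \setminus T_\beta^\alpha$.

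For the first cross term I would invoke Theorem \ref{t:k-walks2}: $(A_\alpha A_\beta)_{ij}$ counts length-$2$ walks $i \to k \to j$ whose first edge lies in $G_\alpha$ and second in $G_\beta$. The requirement that $k$ be adjacent to $j$ in $G_\beta$ forces $k \in T^\beta$, and the structural observation then forces $i \in T^\alpha \setminus T_\beta^\alpha$ for the $G_\alpha$-edge from $i$ to $k$ to exist. Counting $G_\beta$-neighbours of $j$ lying in $T^\beta$ gives $(A_\alpha A_\beta)_{ij} = d_j^\beta$ precisely when $i \in T^\alpha \setminus T_\beta^\alpha$ and $j \in T^\beta$, and zero otherwise. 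A direct entrywise calculation shows that $(A_\alpha D_\beta)_{ij} = (A_\alpha)_{ij} d_j^\beta$ takes exactly the same values, so the first two terms cancel: $A_\alpha A_\beta - A_\alpha D_\beta = 0$.

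The remaining two terms are easier, because $d_i^\alpha$ is constant on $T^\beta$. Pulling this scalar out of $D_\alpha$ wherever it meets something supported in $T^\beta$ gives $D_\alpha A_\beta = (|T^\alpha| - |T_\beta^\alpha|) A_\beta$ and $D_\alpha D_\beta = (|T^\alpha| - |T_\beta^\alpha|) D_\beta$. Assembling the four terms yields
\[ Q_\alpha Q_\beta = -(|T^\alpha| - |T_\beta^\alpha|)(A_\beta - D_\beta) = (|T_\beta^\alpha| - |T^\alpha|) Q_\beta, \]
which is the first claimed equality. The identity $Q_\alpha Q_\beta = Q_\beta Q_\alpha$ is then immediate: the right-hand side is a scalar multiple of the symmetric matrix $Q_\beta$ and hence symmetric, so $Q_\alpha Q_\beta$ is symmetric, and Lemma \ref{commutingLemma} delivers commutativity.

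I expect the main subtlety to be the bookkeeping behind the cancellation $A_\alpha A_\beta = A_\alpha D_\beta$: one must verify that the two matrices agree on every type of entry, including the diagonal $i = j \in T^\beta$ (where $(A_\alpha A_\beta)_{ii}$ vanishes because $i$ and its $G_\beta$-neighbours all share the $G_\alpha$-partition $T_\beta^\alpha$) and the cases $i \in T_\beta^\alpha \setminus T^\beta$ or $i \notin T^\alpha$ (where $(A_\alpha)_{ik}$ is already zero for every relevant $k$). Each of these cases is forced by the single fact that $T^\beta$ lies inside one partition of $G_\alpha$, but writing this out carefully is the only place any real care is required.
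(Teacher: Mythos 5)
Your proposal is correct and follows essentially the same route as the paper: the same expansion $Q_\alpha Q_\beta = A_\alpha A_\beta - A_\alpha D_\beta - D_\alpha A_\beta + D_\alpha D_\beta$, the same use of Theorem \ref{t:k-walks2} to show $A_\alpha A_\beta = A_\alpha D_\beta$, the same observation that $D_\alpha$ acts as the scalar $|T^\alpha|-|T_\beta^\alpha|$ on anything supported in $T^\beta$, and the same appeal to Lemma \ref{commutingLemma} for commutativity. No substantive differences.
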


\begin{proof}
Suppose $D_\alpha = diag(c_1,...,c_n)$ and $D_\beta = diag(d_1,...,d_n)$. Further let $A_\alpha = (a_{ij})$ and $A_\beta = (b_{ij})$.

Since $Q_\alpha Q_\beta = (A_\alpha-D_\alpha)(A_\beta-D_\beta)$, we know $Q_\alpha Q_\beta = A_\alpha A_\beta -A_\beta D_\alpha - A_\alpha D_\beta + D_\alpha D_\beta$, and it suffices to consider each of these terms separately.

We first consider $D_\alpha D_\beta$. As $v$ is a descendant of $u$, any vertex $i$ of $G_\beta$ with non-zero degree is a subset of a single $k$-partition of $G_\alpha$. In particular as $G_\alpha$ is a complete $k$-partite graph $c_i=|T^\alpha| -|T_\beta^\alpha|$ so it follows

\[ (D_\alpha D_\beta)_{ij} = 
\begin{cases} 
(|T^\alpha| -|T_\beta^\alpha|) d_i & \mbox{ if $i=j$ and $i$ is a descendant of $v$}, \\ 
0 & \mbox{otherwise}
\end{cases}. \]

Therefore $(D_\alpha D_\beta) = (|T^\alpha| -|T_\beta^\alpha|)D_\beta$.

We now consider $A_\beta D_\alpha$. Let $(A_\beta)_{ij} = b_{ij}$. As $D_\alpha$ is diagonal, $(A_\beta D_\alpha)_{ij} = b_{ij} c_i$. In particular, $b_{ij}$ is non-zero (in fact $1$) if and only if $i,j$ are both descendants of $v$ and $i$ and $j$ are in different partitions of $G_\beta$. For all such $i,j$, we see $i$ and $j$ are in the same partition of $G_\alpha$, so again $c_i=|T^\alpha| -|T_\beta^\alpha|$. Hence

\[ (A_\beta D_\alpha)_{ij} = 
\begin{cases} 
|T^\alpha| -|T_\beta^\alpha| & \mbox{ if $i,j$ are descendants of $v$ and in separate partitions of $G_\beta$}, \\ 
0 & \mbox{otherwise}
\end{cases}. \]

Therefore $(A_\beta D_\alpha) = (|T^\alpha| -|T_\beta^\alpha|)A_\beta$.

We now consider $A_\alpha D_\beta$. Let $(A_\alpha)_{ij} = a_{ij}$. As $D_\beta$ is diagonal, $(D_\alpha A_\beta)_{ij} = d_j a_{ij}$. In this case, $d_j$ is non-zero if and only if $j$ is a descendant of $v$. But we know all descendants of $v$ are in the same $k$-partition of $G_\alpha$, so it follows that 

\[ (A_\alpha D_\beta)_{ij} = 
\begin{cases} 
d_j & \mbox{ if $j$ is a descendant of $v$ and $i$ is a descendant of $u$ but not $v$}, \\ 
0 & \mbox{otherwise}
\end{cases}. \]

Finally, we consider $A_\alpha A_\beta$. By Theorem \ref{t:TIGSvsTrees} we can consider the associated TIGS graph of $T$ (and in particular $G_\alpha$ and $G_\beta$), and by Theorem \ref{t:k-walks2}, this says that if $G_\alpha = (V,E_1), G_\beta = (V,E_2)$, then by taking the multigraph $G^\times = (V,E_1 \cup E_2)$, $(A_\alpha A_\beta)_{ij}$ is the number of walks of length $2$ on $G^\times$ from vertex $i$ to vertex $j$, where the first edge $e_1$ is taken from $E_1$ and the second edge $e_2$ from $E_2$. We consider $e_2$ first. This is an edge from leaf $k$ in a partition of $G_\beta$ that does not contain $j$ to $j$ itself, of which there are $deg(j) = d_j$ such edges. It follows that, if it exists, $e_1$ is an edge in $G_\alpha$ from the vertex $i$ (which is not a descendant of $v$) to $k$, of which there is only one. Thus

\[ (A_\alpha A_\beta)_{ij} = 
\begin{cases} 
d_j & \mbox{ if $j$ is a descendant of $v$ and $i$ is a descendant of $u$ but not $v$}, \\ 
0 & \mbox{otherwise}
\end{cases}, \]
which means $A_\alpha A_\beta = A_\alpha D_\beta$.

It follows that
\begin{align*}
Q_\alpha Q_\beta & = A_\alpha A_\beta -A_\beta D_\alpha - A_\alpha D_\beta + D_\alpha D_\beta \\
& = D_\alpha D_\beta -A_\beta D_\alpha \\
& = (|T^\alpha| -|T_\beta^\alpha|)D_\beta - (|T^\alpha| -|T_\beta^\alpha|)A_\beta \\
& = (|T^\alpha| -|T_\beta^\alpha|)(D_\beta - A_\beta) \\
& = (|T_\beta^\alpha|-|T^\alpha|)Q_\beta
\end{align*}
as required.

To complete the proof, we see that $Q_\alpha Q_\beta = Q_\beta Q_\alpha$, as $Q_\alpha$ and $Q_\beta$ are symmetric matrices, and their product is a scalar multiple of a symmetric matrix and hence symmetric itself, so by Lemma \ref{commutingLemma} we know that $Q_\alpha$ and $Q_\beta$ commute.
\end{proof}

Finally, we consider multiplication of two rate matrices associated with incomparable vertices.

\begin{theorem}
Suppose that $u$ and $v$ are are incomparable vertices so that $\omega(u)=\alpha$ and $\omega(v) = \beta$. Let $G_\alpha, G_\beta$ be $\alpha$- and $\beta$-mrca graphs, and $Q_\alpha = A_\alpha-D_\alpha$ and $Q_\beta = A_\beta - D_\beta$ be the $n \times n$ matrices described before. Then 

\[ Q_\alpha Q_\beta = 0_{n \times n}. \]
\label{zeroproductTheorem}
\end{theorem}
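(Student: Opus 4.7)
The plan is to expand $Q_\alpha Q_\beta = (A_\alpha - D_\alpha)(A_\beta - D_\beta)$ into its four summands $A_\alpha A_\beta$, $-A_\alpha D_\beta$, $-D_\alpha A_\beta$, and $D_\alpha D_\beta$, and show each is the zero matrix independently. The driving structural observation is that incomparability of $u$ and $v$ forces the descendant sets of $u$ and $v$ in $\mathcal{T}$ to be disjoint; I will refer to this throughout as $(*)$.

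First I would record the support structure of the four matrices in play. By construction, $(D_\alpha)_{ii}\neq 0$ only when $i$ is a descendant of $u$, and $(A_\alpha)_{ij}\neq 0$ forces both $i$ and $j$ to be descendants of $u$ lying in distinct child-subtrees of $u$; the analogous statements hold for $D_\beta, A_\beta$ with $v$ replacing $u$. Using $(*)$, the three terms involving at least one diagonal factor vanish immediately entrywise: for instance, $(D_\alpha A_\beta)_{ij}=(D_\alpha)_{ii}(A_\beta)_{ij}$ can only be nonzero when $i$ is simultaneously a descendant of $u$ (from the $D_\alpha$-factor) and of $v$ (from the $A_\beta$-factor), which $(*)$ forbids. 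Identical reasoning kills $A_\alpha D_\beta$ and $D_\alpha D_\beta$.

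For the remaining term $A_\alpha A_\beta$, I would invoke Theorem \ref{t:k-walks2}: $(A_\alpha A_\beta)_{ij}$ counts length-$2$ walks $i \to k \to j$ in the multigraph $G^\times = (X,\, E(G_\alpha) \cup E(G_\beta))$ whose first edge lies in $G_\alpha$ and second in $G_\beta$. Any such intermediate vertex $k$ would have to be a descendant of $u$ (to admit an $\alpha$-edge from $i$) and of $v$ (to admit a $\beta$-edge to $j$), again contradicting $(*)$; so no such walks exist and $A_\alpha A_\beta = 0$.

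Summing the four zero contributions gives $Q_\alpha Q_\beta = 0_{n\times n}$, as required. There is no real obstacle here: once the disjointness of descendant sets is made explicit, each summand dies by a one-line support argument, with Theorem \ref{t:k-walks2} providing the cleanest handling of $A_\alpha A_\beta$. The proof is genuinely easier than those of Theorems \ref{squareTheorem} and \ref{productTheorem} because no cancellation between the four terms is needed.
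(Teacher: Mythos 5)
Your proof is correct and rests on the same key observation as the paper's: incomparability of $u$ and $v$ forces their leaf-descendant sets to be disjoint, so $Q_\alpha$ and $Q_\beta$ have disjoint supports. The paper simply notes that, after reordering the basis, the two Laplacians are block diagonal with their nonzero blocks in complementary positions, whereas you reach the same conclusion by expanding into the four summands and killing each one entrywise; the difference is purely presentational.
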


\begin{proof}
By Theorem \ref{t:TIGSvsTrees} we can consider the associated TIGS graph (and in particular $G_\alpha$ and $G_\beta$), and as $u$ and $v$ are incomparable, $G_\alpha$ and $G_\beta$ can have their vertices partitioned into disjoint sets $A$ and $B$, where $G_\alpha$ only has edges between vertices in $A$, and $G_\beta$ only has edges between vertices in $B$.

It therefore suffices to observe that under an appropriate choice of basis, the Laplacian matrix of each graph is block diagonal, where all non-zero blocks of $Q_\alpha$ correspond to zero blocks of $Q_\beta$, and vice versa. It follows that  

\[ Q_\alpha Q_\beta = 0_{n \times n}. \]
\end{proof}

\begin{theorem}
For a binary phylogenetic tree $ \mathcal{T} $, $ \mathcal{Q}_{\mathcal{T}} $ is an commutative matrix algebra. \label{treealgebraThm}
\end{theorem}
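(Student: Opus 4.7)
The plan is to split the claim into two parts: verifying the vector-space structure, and verifying closure under matrix multiplication together with commutativity. The first part is immediate by construction, and the second part reduces mechanically to the three preceding theorems, so the present theorem is essentially an assembly step rather than new work.

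First, I would observe that $\mathcal{Q}_{\mathcal{T}}$ is defined as the real linear span of $\{Q_\alpha : \alpha = \omega(u),\ u \text{ an internal vertex of } \mathcal{T}\}$, so it is a real vector space automatically. Because the rates at internal vertices are distinct and the off-diagonal supports of the $Q_\alpha$ partition $\{(x,y) \in X \times X : x \ne y\}$ (this is exactly the remark that the sets $C_\alpha$ partition $X \times X$ minus the diagonal), the matrices $Q_\alpha$ are linearly independent and hence form a basis.

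The key step is closure under multiplication, for which, by bilinearity, it suffices to check that $Q_\alpha Q_\beta \in \mathcal{Q}_{\mathcal{T}}$ for each pair of internal vertices $u,v$ with $\omega(u)=\alpha$ and $\omega(v)=\beta$. I would split into three exhaustive cases depending on the relationship between $u$ and $v$: (i) $u=v$, in which case Theorem \ref{squareTheorem} expresses $Q_\alpha^2$ as an explicit linear combination of basis matrices, and hence in $\mathcal{Q}_{\mathcal{T}}$; (ii) $u \neq v$ but $u$ and $v$ are comparable, in which case Theorem \ref{productTheorem} gives $Q_\alpha Q_\beta = (|T_\beta^\alpha| - |T^\alpha|)\, Q_\beta$, a scalar multiple of a basis element; and (iii) $u$ and $v$ are incomparable, in which case Theorem \ref{zeroproductTheorem} gives $Q_\alpha Q_\beta = 0$. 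These cases are exhaustive, so closure follows.

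For commutativity I would again argue on basis elements. In cases (i) and (iii) the equality $Q_\alpha Q_\beta = Q_\beta Q_\alpha$ is immediate (the product equals itself, respectively equals zero). In case (ii) the product $Q_\alpha Q_\beta$ is a scalar multiple of the symmetric matrix $Q_\beta$ and so is symmetric; since $Q_\alpha$ and $Q_\beta$ are both symmetric, Lemma \ref{commutingLemma} yields $Q_\alpha Q_\beta = Q_\beta Q_\alpha$. Commutativity on basis elements extends by bilinearity to all of $\mathcal{Q}_{\mathcal{T}}$. I do not anticipate a genuine obstacle here: the only point to watch is that the three cases are exhaustive and that the binarity hypothesis plays no essential role, since the three structural theorems were proved for general (not necessarily binary) trees with distinct rates.
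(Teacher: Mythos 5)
Your proof is correct and follows essentially the same route as the paper's: both establish closure by case analysis on the three structural theorems (\ref{squareTheorem}, \ref{productTheorem}, \ref{zeroproductTheorem}) and derive commutativity from symmetry of the products via Lemma \ref{commutingLemma}. Your observation that the binarity hypothesis is not actually used is accurate and worth noting, but it does not change the argument.
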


\begin{proof}
We know that $ \mathcal{Q}_{\mathcal{T}} $ is a vector space, closed under matrix products (see Theorems \ref{squareTheorem}, \ref{productTheorem}, \ref{zeroproductTheorem}) and that all matrices in $ \mathcal{Q}_{\mathcal{T}} $ and their products are symmetric, so the space is commutative by Lemma \ref{commutingLemma}.

\end{proof}

\section{Algebras induced by trees with repeated rates}
\label{sectionRepeatedR}

So far we have found that when the rates assigned to tree nodes are unique, the matrix set forms an algebra. Now we explore cases of rates not being unique. We note here that the K2P model is an example of a phylosymmetric algebra with non-unique rates. We see that the tree represented in Figure \ref{k2pExample} gives rise to the K2P model. We know from previous work \citep{fernandez2015lie} that the matrix set for K2P is closed under matrix multiplication. However, in the general case, there is no guarantee that a matrix set will still be closed under matrix multiplication when several rates on the tree are set to be equal. We now explore the conditions that have to be met on such a rooted tree for its rate matrix set to be an algebra.

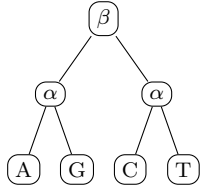
\begin{figure}[h]
\centering
\begin{tikzpicture}[level 1/.style={sibling distance=14mm},level 2/.style={sibling distance=7mm},
  every node/.style = {shape=rectangle, rounded corners,
    draw, align=center}]]
  \node {$ \beta $}
    child[level distance = 1cm]{node{$ \alpha $}
    	child[level distance = 1cm]{node{A}}
    	child[level distance = 1cm]{node{G}} }
    child[level distance = 1cm]{node{$ \alpha $}
    	child[level distance = 1cm]{node{C}}
    	child[level distance = 1cm]{node{T}}}  ;.
\end{tikzpicture}
\caption{A rooted tree on states of DNA with taxa $X = \{$ A, G, C, T $\}$, with all non-leaf vertices labelled by their rates. The phylosymmetric algebra that this tree gives rise to is the K2P model.}
\label{k2pExample}
\end{figure}

\begin{definition}
Let $ \mathcal{T} $ be a tree with at least two non-leaf vertices $u$ and $v$, so that $\omega(u) = \alpha $ and $\omega(v) = \beta $. Let $ \mathcal{T}' $ be a tree with the same topological tree structure and associated rates as $ \mathcal{T} $, with the additional constraint  that $ \alpha = \beta $. (Here, we suppose that there are only two rates on $ \mathcal{T}' $ that are equal) We note that if $ \mathcal{Q}_{\mathcal{T}} = $ span$ \{ Q_{\alpha}, Q_{\beta}, Q_{\gamma}, Q_{\delta}, ... \}_{\mathbb{R}} $ and we define $ Q_{X} = Q_{\alpha} + Q_{\beta} $, then we have $ \mathcal{Q}_{\mathcal{T}'} = $ span$ \{ Q_{X}, Q_{\gamma}, Q_{\delta}, ... \}_{\mathbb{R}} $. If $ \mathcal{Q}_{\mathcal{T}'} $ is a matrix algebra, we say that $ \alpha = \beta $ is a \textit{phylo-algebraic constraint}.
\end{definition}

Labelling two vertices by the same rate is equivalent to adding their rate matrices, so we can consider

\[ (Q_\alpha + Q_\beta)^2 = Q_\alpha^2 + Q_\beta^2 + 2Q_\alpha Q_\beta, \]
as $Q_\alpha Q_\beta = Q_\beta Q_\alpha$ by Lemma \ref{commutingLemma} and Theorem \ref{productTheorem}.

If $u$ is an ancestor of $v$, then by Lemma \ref{squareTheorem} this becomes
\[ Q_\alpha^2 + Q_\beta^2 + 2(|T^\alpha| -|T^\beta|) Q_\beta, \]
and in the particular case that they are incomparable, by Theorem \ref{zeroproductTheorem} we obtain
\[ Q_\alpha^2 + Q_\beta^2. \]

\begin{theorem}
\label{l:siblingID}
If $\mathcal{T}$ is a tree and $u$ and $v$ are siblings so that $\omega(u) = \alpha$ and $\omega(v) = \beta$, and $u$ and $v$ have the same number of leaf descendants, $\alpha = \beta$ is a phylo-algebraic constraint (and hence the resultant matrix algebra is closed).
\end{theorem}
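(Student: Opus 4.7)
The plan is to verify directly that the spanning set $\{Q_X\} \cup \{Q_\gamma : \gamma \neq \alpha, \beta\}$ of $\mathcal{Q}_{\mathcal{T}'}$ is closed under matrix multiplication, by running through every pairwise product and invoking Theorems \ref{squareTheorem}, \ref{productTheorem}, and \ref{zeroproductTheorem}. Since $\mathcal{Q}_\mathcal{T}$ is already closed by Theorem \ref{treealgebraThm}, every product already expands in the old basis $\{Q_\alpha, Q_\beta, Q_\gamma, \dots\}$; the real task is to show that whenever $Q_\alpha$ and $Q_\beta$ both appear in such an expansion, they do so with identical coefficient, so that they collapse into a scalar multiple of $Q_X$. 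Let $p$ denote the common parent of $u$ and $v$, with rate $\pi$.

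First I would handle products involving $Q_X$. Because $u$ and $v$ are siblings they are incomparable, so $Q_\alpha Q_\beta = 0$ by Theorem \ref{zeroproductTheorem}, and hence $Q_X^2 = Q_\alpha^2 + Q_\beta^2$. By Theorem \ref{squareTheorem} the self-coefficient of $Q_\alpha$ in $Q_\alpha^2$ depends on $\mathcal{T}$ only through $|T^\alpha|$, and analogously for $Q_\beta^2$; under the hypothesis $|T^\alpha|=|T^\beta|$ these agree and merge into a multiple of $Q_X$, while the other contributions involve rates at strict descendants of $u$ or $v$, none of which is $\alpha$ or $\beta$. For $Q_X Q_\gamma$ with $\gamma \notin \{\alpha, \beta\}$ I would split into subcases according to the position of the $\gamma$-vertex. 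If that vertex is a strict ancestor of $p$, then it is an ancestor of both $u$ and $v$ and the child-subtree on the downward path to $u$ coincides with that on the downward path to $v$, so Theorem \ref{productTheorem} returns the same scalar on $Q_\alpha$ and on $Q_\beta$. If it is $p$ itself (i.e.\ $\gamma = \pi$), Theorem \ref{productTheorem} returns $(|T^\alpha|-|T^\pi|)Q_\alpha$ and $(|T^\beta|-|T^\pi|)Q_\beta$, equal scalars by the leaf-count hypothesis. If the $\gamma$-vertex is a descendant of $u$ (respectively $v$) then $Q_\beta Q_\gamma = 0$ (respectively $Q_\alpha Q_\gamma = 0$) by Theorem \ref{zeroproductTheorem} and the surviving factor is a multiple of $Q_\gamma$; if it is incomparable with both, both factors vanish.

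Next I would consider products $Q_\gamma Q_\delta$ with $\gamma, \delta \notin \{\alpha, \beta\}$. Theorem \ref{productTheorem} shows the comparable case yields a scalar multiple of the deeper of $Q_\gamma, Q_\delta$, so neither $Q_\alpha$ nor $Q_\beta$ is introduced; Theorem \ref{zeroproductTheorem} disposes of the incomparable case. The only situation that can introduce $Q_\alpha$ and $Q_\beta$ is a square $Q_\gamma^2$ whose vertex is an ancestor of both $u$ and $v$. If it strictly precedes $p$, then $u$ and $v$ lie in the same child-subtree of the $\gamma$-vertex, so by Theorem \ref{squareTheorem} both $Q_\alpha$ and $Q_\beta$ occur inside the same inner sum with the common outer factor $(|T^\gamma| - |T^c|)$ and combine into a multiple of $Q_X$. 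If instead $\gamma = \pi$, then $u$ and $v$ are distinct children of $p$ and $Q_\alpha, Q_\beta$ enter through sibling child-subtrees with coefficients $|T^\pi| - |T^\alpha|$ and $|T^\pi| - |T^\beta|$ respectively; these agree precisely by the hypothesis.

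The main obstacle I anticipate is isolating the $Q_\pi^2$ subcase and the corresponding $Q_\pi Q_X$ subcase: these are the only places where the topology alone does not force symmetry and the condition $|T^\alpha|=|T^\beta|$ is genuinely needed. In every other position of the third vertex, the siblinghood of $u$ and $v$ already ensures that the expansion is symmetric in $\alpha$ and $\beta$. Once these two cases are dealt with, closure of $\mathcal{Q}_{\mathcal{T}'}$ follows by linearity, and $\alpha = \beta$ is confirmed to be a phylo-algebraic constraint.
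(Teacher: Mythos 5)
Your proposal is correct and follows the same basic strategy as the paper's proof: expand every pairwise product of the spanning set using Theorems \ref{squareTheorem}, \ref{productTheorem} and \ref{zeroproductTheorem}, and check that $Q_\alpha$ and $Q_\beta$ only ever reappear with a common coefficient. However, you check strictly more than the paper does, and the extra cases matter. The paper's proof only examines products involving $Q_X=Q_\alpha+Q_\beta$, namely $Q_X^2$ and $Q_X Q_\gamma$; it never considers products $Q_\gamma Q_\delta$ of the remaining basis elements. For distinct comparable or incomparable $\gamma,\delta$ this is harmless, but the square $Q_\pi^2$ of the matrix attached to the common parent $p$ of $u$ and $v$ reintroduces $Q_\alpha$ and $Q_\beta$ through two \emph{different} child-subtree sums, with coefficients $|T^\pi|-|T^\alpha|$ and $|T^\pi|-|T^\beta|$, so closure at that product genuinely uses the hypothesis $|T^\alpha|=|T^\beta|$ a second time. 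You identify exactly this case (together with the parallel $Q_\pi Q_X$ subcase), so your argument closes a gap the published proof leaves implicit. Your finer splitting of the ``$w$ an ancestor of both'' case into ``strict ancestor of $p$'' versus ``$w=p$'' is likewise a welcome precision: in the first subcase the two coefficients from Theorem \ref{productTheorem} agree for purely topological reasons ($T^\gamma_\alpha = T^\gamma_\beta$), whereas in the second they agree only by the leaf-count hypothesis, a distinction the paper's single displayed formula blurs.
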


\begin{proof}
Suppose $u$ and $v$ are siblings, and have the same number of leaf descendants (i.e. $|T^\alpha| = |T^\beta|$). Then, by Theorem \ref{squareTheorem}, 

\[ Q_\alpha^2 + Q_\beta^2 = -|T^\alpha|(Q_\alpha + Q_\beta) + \text{scalar multiples of the rate matrices of their descendants,}\]
which is certainly within the generated matrix set. As $u$ and $v$ are siblings, then for any third vertex $w$ with rate $\gamma$, $w$ is an ancestor to both of them, incomparable to both of them, or incomparable to one and a descendant of the other.

If $w$ is an ancestor of both $u$ and $v$, then $(Q_\alpha + Q_\beta)Q_\gamma = (|T^\gamma| -|T^\beta|)(Q_\alpha + Q_\beta)$. If $w$ is incomparable to both, $(Q_\alpha + Q_\beta)Q_\gamma = 0_{n \times n}$. If, $w$ is, say, incomparable to $u$ and a descendant of $v$, then $(Q_\alpha + Q_\beta)Q_\gamma = (|T^\alpha| -|T^\gamma|)Q_\gamma$. This covers all possible cases, as $u$ and $v$ are siblings.

In all three cases the result is clearly in the algebra, so we will always obtain a phylosymmetric algebra.
\end{proof}

\begin{theorem}
\label{l:parentID}
If $\mathcal{T}$ is a tree, $u$ and $v$ are interior vertices such that $\omega(u)=\alpha$ and $\omega(v)=\beta$, and one of $u$ and $v$ is the parent of the other, $\alpha = \beta$ is a phylo-algebraic constraint
\end{theorem}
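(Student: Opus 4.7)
Without loss of generality assume $u$ is the parent of $v$. The rate set of $\mathcal{T}'$ is spanned by $Q_X := Q_\alpha + Q_\beta$ together with the rate matrices $Q_\gamma$ for all other interior vertices, and closure of the resulting algebra reduces to showing that $Q_X^2$, every $Q_X Q_\gamma$, and every $Q_\gamma Q_\delta$ with $\gamma,\delta \ne \alpha,\beta$ lies in this span. The last family is inherited from $\mathcal{Q}_\mathcal{T}$ and is only problematic when the expansion involves both $Q_\alpha$ and $Q_\beta$ with unequal coefficients. By Theorems \ref{productTheorem} and \ref{zeroproductTheorem}, a product of two distinct external rate matrices collapses to a scalar multiple of one of them (which is neither $Q_\alpha$ nor $Q_\beta$), so the only remaining case is $Q_\gamma^2$ where the vertex $w$ carrying $\gamma$ is an ancestor of $u$. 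Theorem \ref{squareTheorem} produces, for the child $w'$ of $w$ ancestral to $u$, a common coefficient on every rate matrix of a descendant of $w'$; since $u$ and $v$ both lie in $V_{w'}$, the coefficients of $Q_\alpha$ and $Q_\beta$ automatically match and recombine into $Q_X$.

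For the mixed products $Q_X Q_\gamma$ I plan a case analysis on the position of the vertex $w$ with rate $\gamma$ relative to $u$ and $v$. If $w$ is incomparable to $u$ it is also incomparable to $v$, so Theorem \ref{zeroproductTheorem} gives $Q_X Q_\gamma = 0$. If $w$ is an ancestor of $u$ then the child of $w$ ancestral to $u$ is also the child ancestral to $v$, so Theorem \ref{productTheorem} yields $Q_\gamma Q_\alpha$ and $Q_\gamma Q_\beta$ with the same scalar, making $Q_X Q_\gamma$ a multiple of $Q_X$. If $w$ is a descendant of $v$, Theorem \ref{productTheorem} makes both $Q_\alpha Q_\gamma$ and $Q_\beta Q_\gamma$ scalar multiples of $Q_\gamma$. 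Finally, if $w$ is a descendant of $u$ but not of $v$, then $w$ and $v$ are incomparable, so $Q_\beta Q_\gamma = 0$ and $Q_\alpha Q_\gamma$ is a scalar multiple of $Q_\gamma$. In every case the result lies in the new span.

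The main work is $Q_X^2 = Q_\alpha^2 + Q_\beta^2 + 2 Q_\alpha Q_\beta$. Since $v$ is itself the child of $u$ ancestral to $v$, Theorem \ref{productTheorem} gives $Q_\alpha Q_\beta = (|T^\beta| - |T^\alpha|)\, Q_\beta$. Expanding $Q_\alpha^2$ via Theorem \ref{squareTheorem}, the term indexed by the child $v$ of $u$ contributes coefficient $(|T^\alpha| - |T^\beta|)$ to $Q_\beta$ (and to every rate matrix of a descendant of $v$), while $Q_\beta^2$ contributes coefficient $-|T^\beta|$ to $Q_\beta$ together with extra terms involving only rate matrices of proper descendants of $v$. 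Collecting the $Q_\beta$-coefficient in $Q_X^2$ gives $(|T^\alpha|-|T^\beta|) + (-|T^\beta|) + 2(|T^\beta|-|T^\alpha|) = -|T^\alpha|$, which matches the coefficient $-|T^\alpha|$ of $Q_\alpha$ (coming only from $Q_\alpha^2$). Hence the $Q_\alpha$ and $Q_\beta$ contributions collapse into $-|T^\alpha|\, Q_X$, while all remaining basis elements appearing are rate matrices of vertices different from $u$ and $v$. I expect the principal technical obstacle to be precisely this coefficient reconciliation for $Q_X^2$, which requires carefully accounting for the $Q_\beta$ contribution that leaks out of $Q_\alpha^2$ along the subtree $T^\beta$ and verifying its cancellation against the diagonal of $Q_\beta^2$ and the cross term $2 Q_\alpha Q_\beta$.
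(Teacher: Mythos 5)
Your proof is correct, but it takes a genuinely different route from the paper. You verify closure directly: you expand $Q_X^2 = Q_\alpha^2 + Q_\beta^2 + 2Q_\alpha Q_\beta$ and check that the coefficients of $Q_\alpha$ and $Q_\beta$ reconcile to $-|T^\alpha|$ (your arithmetic $(|T^\alpha|-|T^\beta|) - |T^\beta| + 2(|T^\beta|-|T^\alpha|) = -|T^\alpha|$ is right, and it matches the entry-wise form of Theorem \ref{squareTheorem} --- note that you are implicitly using that entry-wise form rather than the displayed ``equivalent'' expansion, whose coefficient $(1-|T^\alpha|)$ on $Q_\alpha$ would not reconcile; your version is the consistent one), and your case analyses for $Q_XQ_\gamma$ and for squares $Q_\gamma^2$ of ancestral vertices are complete and correct. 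The paper instead argues structurally: identifying the rates of a parent--child pair merges $G_\alpha$ and $G_\beta$ in the associated TIGS into a single complete $(k+k'-1)$-partite graph, so the resulting graph set is again a TIGS; by the bijection of Theorem \ref{t:TIGSvsTrees} it corresponds to a tree (the one obtained by contracting the edge $uv$), and closure then follows wholesale from Theorem \ref{treealgebraThm}. The paper's route is shorter and buys the extra structural fact, recorded in the Observation that follows, that the constrained set is not merely \emph{a} matrix algebra but is itself the phylosymmetric algebra of the contracted tree, which is what allows iterating the construction over connected subgraphs of identified rates. Your route buys an explicit, self-contained verification that exhibits the actual product formulas in the constrained algebra and does not depend on the TIGS machinery; it is more work but arguably more robust, since it does not rest on the claim that the merged graph set satisfies every clause of the TIGS definition.
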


\begin{proof}
Suppose without loss of generality, $u$ is the parent of $v$. We first consider the tree $T$ without the $\alpha = \beta$ constraint. Using Theorem \ref{t:TIGSvsTrees} we can consider the associated TIGS, in particular $G_\alpha$ and $G_\beta$. Suppose $G_\alpha$ be a complete $k$-partite graph and $G_\beta$ be a complete $k'$-partite graph. In this case we can see that the only change induced to the corresponding TIGS by the $\alpha = \beta$ constraint is that $G_\alpha$ and $G_\beta$ are removed and replaced with $G_\alpha + G_\beta$, where $+$ indicates a graph sum.  Then the resulting mrca graph set is certainly a TIGS, as we can partition $G_\alpha + G_\beta$ into a complete $(k+k'-1)$-partite graph, by applying the $k$-partition of $G_\alpha$ and subpartition the partition consisting of the descendants of $v$ into the $k'$ parts corresponding to $G_\beta$.

The resultant TIGS therefore corresponds to a tree by Theorem \ref{t:TIGSvsTrees}, and therefore by Theorem \ref{treealgebraThm} forms a matrix algebra.
\end{proof}

\begin{observation}
The set of basis matrices obtained in the case of Lemma \ref{l:parentID} coincides exactly with the set of basis matrices of the tree in which the vertices $u$ and $v$ are identified in the graph theoretic sense. Let $T$ be a tree in which there is a union $\cup C_i$ of connected subgraphs of $T$ where each connected subgraph $C_i$ has all rates identified with each other, but not any other connected subgraph $C_j$. Then this will also induce a matrix algebra (indeed a phylosymmetric algebra), as we can sequentially identify parent-child pairs, obtain a matrix algebra corresponding to a tree and then identify another parent-child pair.
\end{observation}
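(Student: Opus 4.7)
The plan has two parts, matching the two claims in the observation. First I would verify the opening sentence: the matrix set obtained from $\mathcal{T}$ after identifying the rates at a parent $u$ and its child $v$ equals the matrix set of the tree $\mathcal{T}'$ obtained by contracting the edge $(u,v)$ (the graph-theoretic identification). Let $w$ denote the vertex of $\mathcal{T}'$ that results from merging $u$ and $v$; its children are $(C_u\setminus\{v\}) \cup C_v$, and $\omega(w)=\alpha$. For any two leaves $x,y$, the mrca in $\mathcal{T}'$ equals $w$ exactly when the mrca in $\mathcal{T}$ is $u$ or $v$, and for every other vertex $z$ the mrca relation is unchanged. Reading the entries of the rate matrices directly from the definition gives $Q^{\mathcal{T}'}_w = Q^{\mathcal{T}}_u + Q^{\mathcal{T}}_v$, while all remaining rate matrices agree. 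Hence the bases coincide, and Lemma \ref{l:parentID} can be recast as saying that contracting a parent-child edge preserves the phylosymmetric algebra structure.

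Second, I would establish the general claim by induction on the total number of edges to be contracted, that is, on $\sum_i (|V(C_i)|-1)$. If every $C_i$ is a single vertex the statement is vacuous. Otherwise, pick any edge $(u,v)$ of $\mathcal{T}$ lying inside some $C_i$, noting that $u$ and $v$ are then interior vertices of $\mathcal{T}$ with $\omega(u)=\omega(v)$. By Lemma \ref{l:parentID} the identification $\omega(u)=\omega(v)$ is a phylo-algebraic constraint, so the constrained matrix set is still an algebra, and by the first step it coincides with $\mathcal{Q}_{\mathcal{T}^{(1)}}$ for the contracted tree $\mathcal{T}^{(1)}$. The collection of subgraphs on $\mathcal{T}^{(1)}$ obtained from the $C_j$ (with $C_i$ replaced by its contraction $C_i^{(1)}$) still consists of vertex-disjoint connected subgraphs, each carrying a single rate throughout, so the inductive hypothesis applies and yields a phylosymmetric algebra after the remaining contractions.

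The only real thing to watch is the bookkeeping that contracting an internal edge of $C_i$ produces a still-connected subgraph $C_i^{(1)}$ with a single shared rate, and does not merge $C_i$ with any other $C_j$ (which holds because the $C_j$ are pairwise vertex-disjoint in $\mathcal{T}$ and the edge being contracted lies strictly inside $C_i$). I do not anticipate a genuine technical obstacle here; the content of the observation is essentially that Lemma \ref{l:parentID} is closed under iteration, and the first step above makes that iteration transparent by identifying the constrained algebra with the algebra of an honest smaller tree at each stage.
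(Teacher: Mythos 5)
Your proposal is correct and follows essentially the same route as the paper: the observation's own justification is precisely the iterated application of Lemma \ref{l:parentID}, reinterpreting each rate identification as a parent--child edge contraction that yields an honest smaller tree, and then repeating. Your contribution is simply to make the edge-contraction correspondence and the induction on the number of contracted edges explicit, which fills in the bookkeeping the paper leaves implicit.
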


\begin{theorem}
Let $\mathcal{T}$ be a tree with unique rates and $ \mathcal{Q}_{\mathcal{T}} $ be the phylosymmetric algebra of $ \mathcal{T} $. If $u$ and $v$ are interior vertices so that $\omega(u) = \alpha$ and $\omega(v) = \beta$, we define $\mathcal{Q}_{\mathcal{T}}^{\alpha=\beta}$ as the matrix set generated from setting $\alpha = \beta$. $\mathcal{Q}_{\mathcal{T}}^{\alpha=\beta}$ is a matrix algebra if and only if one of the following is true:
\begin{enumerate}
    \item $u$ is a parent of $v$ or vice versa;
    \item $u$ and $v$ are siblings and have the same number of leaf descendants.
\end{enumerate}
\end{theorem}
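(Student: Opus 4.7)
The ``if'' direction is already covered by Theorems \ref{l:parentID} and \ref{l:siblingID}, so the remaining work is the converse. I would proceed by contrapositive: assuming $u$ and $v$ satisfy neither condition, I would exhibit a specific product of basis elements of $\mathcal{Q}_{\mathcal{T}}^{\alpha=\beta}$ that falls outside the set. The configurations ruled out by failing both (1) and (2) split into three cases: (i) $u$ and $v$ are comparable but neither is a child of the other; (ii) $u$ and $v$ are incomparable but not siblings; (iii) $u$ and $v$ are siblings with $|T^\alpha|\neq|T^\beta|$.

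The guiding observation is that an element of $\mathcal{Q}_{\mathcal{T}}$ lies in $\mathcal{Q}_{\mathcal{T}}^{\alpha=\beta}$ if and only if, when expanded in the original (linearly independent) basis $\{Q_\alpha, Q_\beta, Q_\gamma, \ldots\}$ of $\mathcal{Q}_{\mathcal{T}}$, it assigns the same coefficient to $Q_\alpha$ and $Q_\beta$, since the only basis vector of $\mathcal{Q}_{\mathcal{T}}^{\alpha=\beta}$ involving either is $Q_X = Q_\alpha + Q_\beta$. Hence to prove non-closure in each case it suffices to exhibit one product of basis elements whose expansion yields distinct coefficients on $Q_\alpha$ and $Q_\beta$.

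For case (iii), I would compute $Q_X^2$. Since siblings are incomparable, Theorem \ref{zeroproductTheorem} gives $Q_\alpha Q_\beta = 0$, so $Q_X^2 = Q_\alpha^2 + Q_\beta^2$, and Theorem \ref{squareTheorem} then identifies the coefficient of $Q_\alpha$ as $1-|T^\alpha|$ and that of $Q_\beta$ as $1-|T^\beta|$, which are unequal by hypothesis. For case (i), assume without loss of generality that $v$ is a proper descendant of $u$, and let $w$ be the child of $u$ on the path to $v$ with $\omega(w)=\gamma$; I would then analyse $Q_X Q_\gamma$. Theorem \ref{productTheorem} yields $Q_\alpha Q_\gamma = (|T^\gamma|-|T^\alpha|)Q_\gamma$ and $Q_\beta Q_\gamma = (|T_\beta^\gamma|-|T^\gamma|)Q_\beta$, so the coefficient of $Q_\alpha$ is $0$ while that of $Q_\beta$ is strictly negative --- because $w$ is a non-root interior vertex of a phylogenetic tree it has at least two children, so $T_\beta^\gamma\subsetneq T^\gamma$. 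For case (ii), let $r$ be the mrca of $u$ and $v$, and let $u^*, v^*$ be the children of $r$ on the paths to $u$ and $v$ respectively; the non-sibling hypothesis lets me assume (up to swapping $u$ and $v$) that $v^* \neq v$. Setting $\sigma=\omega(v^*)$, I would consider $Q_X Q_\sigma$: here $u$ and $v^*$ lie in different subtrees of $r$ and so are incomparable, giving $Q_\alpha Q_\sigma=0$ by Theorem \ref{zeroproductTheorem}, while Theorem \ref{productTheorem} gives $Q_\beta Q_\sigma=(|T_\beta^\sigma|-|T^\sigma|)Q_\beta$ with nonzero coefficient by the same ``at least two children'' argument applied to $v^*$.

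The main obstacle, as I anticipate it, is case (ii): one must locate a third vertex whose associated rate matrix exposes the asymmetry between $Q_\alpha$ and $Q_\beta$, and it is precisely the failure of the sibling condition that guarantees such a vertex exists strictly between $r$ and one of $u, v$. Once that vertex is identified, the remaining arithmetic reduces to a direct application of the product formulas already established in Theorems \ref{squareTheorem}, \ref{productTheorem}, and \ref{zeroproductTheorem}. A minor bookkeeping step in each case is verifying that the witness rate ($\gamma$ or $\sigma$) differs from both $\alpha$ and $\beta$, which is immediate from the uniqueness of rates on $\mathcal{T}$ together with the fact that the witness vertex is distinct from $u$ and $v$ by construction.
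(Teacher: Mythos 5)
Your proposal is correct and follows essentially the same route as the paper: the ``if'' direction is delegated to the two sufficiency results, and the converse is handled by exhibiting, in each of the three excluded configurations, a product whose expansion in the original basis puts unequal coefficients on $Q_\alpha$ and $Q_\beta$. The only divergence is cosmetic --- in the unequal-siblings case you square $Q_X$ itself where the paper squares $Q_\gamma$ for $\gamma$ the parent's rate, and your identification of the witness vertices in the incomparable case is somewhat more explicit than the paper's.
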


\begin{proof}

For an added constraint $\alpha = \beta$, we let $ Q_{X} = Q_{\alpha} + Q_{\beta} $. We can show that $ \mathcal{Q}_{\mathcal{T}}^{\alpha = \beta} $ is not a matrix algebra by showing that products in the space cannot be written as linear combinations that include $Q_{X}$ but do not include $Q_{\alpha}$ and $Q_{\beta}$.

First we assume that $ \mathcal{Q}_{\mathcal{T}}^{\alpha=\beta} $ is a matrix algebra. There are five possible ways to describe the positions of two vertices $u$ and $v$ on a tree: 
\begin{enumerate}
    \item There exists a vertex $w$ such that $w$ is a descendant of $u$ and an ancestor of $v$.
    \item There exists a vertex $w$ such that $u$ and $w$ are incomparable and $v$ is a descendant of $w$.
    \item There exists a vertex $w$ with rate $\gamma$ such that $u$ and $v$ are child vertices of $w$ and $ |T^{\alpha}| \not= |T^{\beta}| $.    
    \item There exists a vertex $w$ with rate $\gamma$ such that $u$ and $v$ are child vertices of $w$ and $ |T^{\alpha}| = |T^{\beta}| $.
    \item The vertex $u$ is a parent of $v$ or vice versa.
\end{enumerate}

In Case 1 we see that
\begin{align*}
    Q_{\gamma}Q_{X} =& Q_{\gamma}(Q_{\alpha} + Q_{\beta}) \\
    =& Q_{\gamma}Q_{\alpha} + Q_{\gamma}Q_{\beta} \\
    =& -n_{1}Q_{\gamma}-n_{2}Q_{\beta} (\because \text{Theorem \ref{productTheorem} } \text{ where } n_{i} \in \mathbb{N}),
\end{align*}

\noindent as $n_{1} \not= n_{2}$, therefore $ \alpha = \beta $ is not a phylo-algebraic constraint and $ \mathcal{Q}^{\alpha = \beta}_{\mathcal{T}} $ is not a matrix algebra.

For Case 2 we let $u$ and $w$ be incomparable and $v$ be a descendant of $w$. We then have

\begin{align*}
    Q_{\gamma}Q_{X} =& Q_{\gamma}(Q_{\alpha} + Q_{\beta}) \\
    =& Q_{\gamma}Q_{\alpha} + Q_{\gamma}Q_{\beta} \\
    =& (|T^\gamma| - |T^\beta|)Q_\beta.
\end{align*}

As this set of matrices are linearly independent, any scalar multiple of $Q_\beta$ is not able to be generated by the set, and so this product is not contained within the space.

In Case 3, if we denote the set of child vertices of $w$ by $C_w$,
\begin{align*}
    Q_{\gamma}^2 &= (1-|T^\gamma|)Q_{\gamma} + \sum_{\delta \in \omega(C_w)} [(|T^\gamma| - |T^\delta|)(\sum_{\epsilon \in \omega(V_w)} Q_\epsilon )] \\
    &= (|T^\gamma| - |T^\alpha|)Q_{\alpha} + (|T^\gamma| - |T^\beta|)Q_{\beta} + \text{ other matrix terms linearly independent of } Q_{\alpha}\text{ and }Q_{\beta}.\\
\end{align*}

\noindent As we know that $ |T^{\alpha}| \not= |T^{\beta}| $, we can see that under these circumstances, $ \mathcal{Q}_{\mathcal{T}}^{\alpha = \beta} $ is not a matrix algebra.

So we see that only cases 4 and 5 remain, and both produce matrix algebras by Lemmas \ref{l:parentID} and \ref{l:siblingID} respectively.

The theorem follows.
\end{proof}

\section{Discussion}

In Section \ref{backgroundSection}, we introduced a set of matrices associated with trees that had rates associated to each non-leaf vertex. In Section \ref{graphTsection}, we derived results on the multiplication of these matrices, and showed, in the case that each rate is unique, that the matrices form a matrix algebra. In Section \ref{sectionRepeatedR}, we extended this result to completely characterise all conditions for which the matrices form a matrix algebra when two rates are identical, and derived sufficient conditions for simple cases of arbitrarily many equal rates.

In previous work it has been found that building phylogenetic models with a focus on mathematical, rather than biological, properties can produce models which are computationally faster to use and can address biological problems that had not previously been considered \citep{sumner2012lie,sumner2017,shore2015lie}. Development of phylogenetic models also presents new applications of, and new problems in, linear algebra, graph theory and other areas of mathematics \citep{steel2016phylogeny}. Phylosymmetric algebras are an application of both linear algebra and graph theory in phylogenetics which has previously been unexplored. We hope that future research in this area will provide similarly valuable results. In particular, future work could characterise \textit{all} conditions for which a tree with a given set of associated rates form a matrix algebra. In addition, a characterisation of which matrix algebras are induced by trees would also be interesting and may lead to a better structural understanding of rooted trees.

Another avenue of possible research from this point is development of phylogenetic models. We have shown that phylosymmetric algebras have desirable mathematical properties. \citet{sumner2012lie,shore2015lie} have shown that such mathematical properties are desirable in rate substitution models. To use these algebras for rate substitution models in DNA would not provide much in the way of new ground given the broad literature of DNA rate substitution models (for example, \citet{fernandez2015lie} for example provides a list of all parameterised DNA models with purine/pyrimidine symmetry which are closed under multiplication). Although, as discussed in Section \ref{sectionRepeatedR}, we note that the K2P model is an example of a phylosymmetric algebra.

In amino acid substitution models, however, empirical models are most commonly used (\citet{le2008improved} for example) with very few parameterised models having been developed an utilised. The current parameterised amino acid substitution models \citep{yang1998models, adachi1996model} have between 24 and 190 parameters and are not constructed with desirable mathematical properties. To fill this gap, our method of rate matrix construction could be used to build a suite of parameterised amino acid substitution matrices with between 3 and 19 parameters. Having a smaller number of parameters makes computations faster (and hence more computational power can be dedicated to checking the robustness of results) \citep{mello2016fast} and makes the process of interpreting the fitted parameters a much simpler task. 

This proposed method of amino acid substitution matrix generation is distinct from all existing amino acid substitution matrices as our proposed approach features a set of parameterised matrices with a low number of parameters. These models have desirable mathematical properties and, given we can build the initial trees with splits that represent characteristics of amino acids such as polarity, the parameters convey biological significance. As well as such models being mathematically tractable, they have also already been shown to have real biological applications and correlate with biological data as shown by \citet{shore2019}.

\bibliographystyle{spbasic}
\bibliography{template}

\end{document}